\newcommand{\bra}[1]{\langle #1|}
\newcommand{\ket}[1]{|#1\rangle}
\newcommand{\braket}[1]{\langle #1 \rangle}
\newcommand{\abs}[1]{\lvert #1\rvert}
\newcommand{\Tr}{\mathrm{Tr}}
\newcommand{\tr}{{\rm tr}}
\newtheorem{theorem}{Theorem}
\definecolor{shadecolor}{rgb}{0.92,0.92,0.92}  
\begin{document}

\title{{The tightness of multipartite coherence from spectrum estimation}}

\author{Qi-Ming Ding}
%\email{These authors contributed equally to this work}
\affiliation{School of Physics, Shandong University, Jinan 250100, China}

\author{Xiao-Xu Fang}
%\email{These authors contributed equally to this work}
\affiliation{School of Physics, Shandong University, Jinan 250100, China}

\author{He Lu}
\email{luhe@sdu.edu.cn}
\affiliation{School of Physics, Shandong University, Jinan 250100, China}

\begin{abstract}
Detecting multipartite quantum coherence usually requires quantum state reconstruction, which is quite inefficient for large-scale quantum systems. {Along this line of research, several efficient procedures have been proposed to detect multipartite quantum coherence without quantum state reconstruction, among which the spectrum-estimation-based method is suitable for various coherence measures. Here, we first generalize the spectrum-estimation-based method for the geometric measure of coherence. Then, we investigate the tightness of the estimated lower bound of various coherence measures, including the geometric measure of coherence, $l_1$-norm of coherence, the robustness of coherence, and some convex roof quantifiers of coherence multiqubit GHZ states and linear cluster states. Finally, we demonstrate the spectrum-estimation-based method as well as the other two efficient methods by using the same experimental data [Ding \emph{et al.}~Phys. Rev. Research 3, 023228 (2021)]. We observe that the spectrum-estimation-based method outperforms other methods in various coherence measures, which significantly enhances the accuracy of estimation.}
\end{abstract}
\maketitle

\section{Introduction}
\label{sec:sectionI}
	Quantum coherence, as a fundamental characteristic of quantum mechanics, describes the ability of a quantum state to present quantum interference phenomena~\cite{nielsen_quantum_2010}. It also plays a central role in many emerging areas, including quantum metrology~\cite{giovannetti_quantum-enhanced_2004,giovannetti_advances_2011}, nanoscale thermodynamics~\cite{lostaglio_description_2015,narasimhachar_low-temperature_2015,aberg_catalytic_2014,gour_resource_2015}, and energy transportation in the biological system~\cite{huelga_vibrations_2013,lloyd_quantum_2011,lambert_quantum_2013,romero_quantum_2014}. Recently, a rigorous framework for quantifying coherence as a quantum resource was introduced~\cite{baumgratz_quantifying_2014,streltsov_colloquium_2017,hu_quantum_2018}. 	Meanwhile, the framework of resource theory of coherence has been extended from a single party to the multipartite scenario~\cite{bromley_frozen_2015,radhakrishnan_distribution_2016,streltsov_measuring_2015,yao_quantum_2015}. 

	Based on the general framework, several coherence measures have been proposed, such as the $l_1$ norm of coherence, the relative entropy of coherence~\cite{baumgratz_quantifying_2014}, the geometric measure of coherence~\cite{streltsov_measuring_2015}, the robustness of coherence~\cite{napoli_robustness_2016,piani_robustness_2016}, some convex roof quantifiers of coherence~\cite{yuan_intrinsic_2015,winter_operational_2016,zhu_operational_2017,liu_new_2017,qi_measuring_2017}, and others~\cite{shao_fidelity_2015,chin_coherence_2017,rana_trace-distance_2016,zhou_polynomial_2017,xi_coherence_2019,xi_epsilon-smooth_2019,cui_examining_2020}. These coherence measures make it possible to quantify the role of coherence in different quantum information processing tasks, especially in the multipartite scenario, such as quantum state merging~\cite{streltsov_entanglement_2016}, coherence of assistance~\cite{chitambar_assisted_2016}, incoherent teleportation~\cite{streltsov_towards_2017}, coherence localization~\cite{styliaris_quantum_2019}, and anti-noise quantum metrology~\cite{zhang_demonstrating_2019}. However, detecting or estimating most coherence measures requires the reconstruction of quantum states, which is inefficient for large-scale quantum systems.

	Efficient protocols for detecting quantum coherence without quantum state tomography have been recently investigated~\cite{smith_quantifying_2017,wang_directly_2017,yuan_direct_2020,zhang_estimating_2018,yu_detecting_2019,ding_efficient_2021,dai_experimentally_2020,ma_detecting_2021}.
	However, the initial proposals require either complicated experiment settings for multipartite quantum systems~\cite{smith_quantifying_2017,wang_directly_2017,yuan_direct_2020} or complex numerical optimizations~\cite{zhang_estimating_2018}. An experiment-friendly tool, the so-called spectrum-estimation-based method, requires local measurements and simple post-processing~\cite{yu_detecting_2019}, and has been experimentally demonstrated to measure the relative entropy of coherence~\cite{ding_efficient_2021}. Other experiment-friendly tools, such as the fidelity-based estimation method~\cite{dai_experimentally_2020} and the witness-based estimation method~\cite{ma_detecting_2021}, have been successively proposed very recently. {The fidelity-based estimation method delivers lower bounds for coherence concurrence~\cite{qi_measuring_2017}, the geometric measure of coherence~\cite{streltsov_measuring_2015}, and the coherence of formation~\cite{winter_operational_2016}, and the witness-based estimation method can be used to estimate the robustness of coherence~\cite{napoli_robustness_2016}.
	
	Still, there are several unexplored matters along this line of research. First, on the theoretical side, although it has been studied that the spectrum-estimation-based method is capable to detect coherence of several coherence measures~\cite{yu_detecting_2019}, there still exists some coherence measures unexplored. On the experimental side, the realization is focused on the detection of relative entropy of coherence~\cite{ding_efficient_2021}, and its feasibility for other coherence measures has not been tested. Second, the tightness of estimated bounds on multipartite states with spectrum-estimation-based method has not been extensively discussed. Third, while the efficient schemes have been studied either theoretically or experimentally, their feasibility and comparison with the same realistic hardware are under exploration. In particular, implementing efficient measurement schemes and analysing how the noise in realistic hardware affects the measurement accuracy are critical for studying their practical performance with realistic devices.

	The goal of this work is to investigate the spectrum-estimation-based method in three directions: First, we generalize the spectrum-estimation-based method to detect the geometric measure of coherence, which has not been investigated yet. Second, we investigate the tightness of the estimated bound with the spectrum-estimation-based method on multipartite Greenberger-Horne-Zeilinger(GHZ) states and linear cluster states. Finally, we present the comparison of the efficient methods with the same experimental data.
	
	The article is organized as follows. In~\cref{sec:sectionII}, we briefly introduce the theoretical background, including the review of definitions of well-explored coherence measures, the present results of coherence estimation with the spectrum-estimation-based method and the construction of constraint in the spectrum-estimation-based method. In~\cref{sec:sectionIII}, we provide the generalization of the spectrum-estimation-based method for the geometric measure of coherence. In~\cref{sec:sectionVI}, we discuss the tightness of estimated bounds on multipartite states. In~\cref{sec:sectionV}, we present the results of comparison for three estimation methods. Finally, we conclude in~\cref{sec:sectionIV}.}
	
	%%%%%%%%%%%%%%%%%%%%%%%%%%%%%%%%%%%%%%%%%%
\section{Theoretical background}	
\label{sec:sectionII}

\subsection{Review of coherence measures}
A functional $C$ can be regarded as a coherence measure if it satisfies four postulates:~non-negativity, monotonicity, strong monotonicity, and convexity \cite{baumgratz_quantifying_2014}. For a $n$-qubit quantum state $\rho$ in Hilbert space with dimension of $d=2^n$, the relative entropy of coherence $C_{r}(\rho)$, $l_1$ norm of coherence $C_{l_1}(\rho)$~\cite{baumgratz_quantifying_2014} and the geometric measure of coherence $C_{g}(\rho)$~\cite{streltsov_measuring_2015} are distance-based coherence measures, and are defined as 
\begin{equation}
	C_{r}(\rho)=S(\rho_{d})-S(\rho)
	\label{eq:Cr}
\end{equation}
\begin{equation}
	C_{l_1}(\rho)=\sum_{i\neq j}\abs{\rho_{ij}}
	\label{eq:l1}
\end{equation}
\begin{equation}
	C_{g}(\rho)=1-\max_{\sigma \in \mathcal{I}}F(\rho, \sigma)
	\label{eq:Cgg}
\end{equation}
%\begin{align}
%	&C_{l_1}(\rho)=\sum_{i\neq j}\abs{\rho_{ij}}\\
%	&C_{r}(\rho)=S(\rho_{d})-S(\rho),\\
%	&C_{g}(\rho)=1-\max_{\sigma \in \mathcal{I}}F(\rho, \sigma)
%\end{align}	
respectively, where $S=-\tr[\rho\log_2\rho]$ is the von Neumann entropy, $\rho_{d}$ is the diagonal part of $ \rho $ in the incoherent basis and $F(\rho, \sigma)=\|\sqrt{\varrho} \sqrt{\sigma}\|_{1}^{2}$. 

%Let $\mathcal{D}\left(\mathbb{C}^{d}\right)$ be the convex set of density operators acting on a $d$ -dimensional Hilbert space, and let $\mathcal{I} \subset \mathcal{D}\left(\mathbb{C}^{d}\right)$ be the subset of incoherent states. We define the robustness of coherence (ROC) of a quantum state $\rho \in \mathcal{D}\left(\mathbb{C}^{d}\right)$ as

The robustness of coherence is defined as,
\begin{equation}
	C_{R}(\rho)= \min _{\tau}\left\{s \geq 0 \mid \frac{\rho+s \tau}{1+s}=: \delta \in \mathcal{I}\right\},
	\label{eq:CR_original}
\end{equation} 
where $\mathcal{I}$ is the set of incoherent states and $C_{R}(\rho)$ denotes the minimum weight of another state $\tau$ such that its convex mixture with $\rho$ yields an incoherent state $\delta$~\cite{napoli_robustness_2016}.

%uhlmann_roofs_2010,
Another kind of coherence measure is based on convex roof construction~\cite{yuan_intrinsic_2015,zhu_operational_2017}, such as coherence concurrence $\tilde{C}_{l_{1}}$~\cite{qi_measuring_2017}, and coherence of formation $C_f$~\cite{winter_operational_2016} in form of 
\begin{align}
	\tilde{C}_{l_{1}}(\rho)&=\inf_{\{p_i,\ket{\varphi_i}\}}\sum_ip_iC_{l_{1}}(\ket{\varphi_i}),\label{eq:Cl1_cf}\\
	C_f(\rho)&=\inf_{\{p_i,\ket{\varphi_i}\}}\sum_ip_iC_r(\ket{\varphi_i}),\label{eq:Cf_cf}
\end{align}
where the infimum is taken over all pure state decomposition of $\rho=\sum_{i} p_{i} \ket{\psi_{i}}\bra{\psi_{i}}$. 

It is also important to consider the $l_2$ norm of coherence $C_{l_{2}}(\rho) =\min_{\delta \in \mathcal{I}}\abs{\abs{\rho-\delta}}_{l_2}^{2}= \sum_{i \neq j}\left|\rho_{ij}\right|^{2}=S_{L}(\boldsymbol{d})-S_{L}(\boldsymbol{\lambda})$ with $S_{L}(\boldsymbol{p})=1-\sum_{i=1}^{d} p_{i}^{2}$ being the Tsallis-2 entropy or linear entropy, and $\boldsymbol{\lambda}$ is the spectrum of $\rho$~\cite{baumgratz_quantifying_2014}.

The different coherence measure plays different roles in quantum information processing. The relative entropy of coherence plays a crucial role in coherence distillation~\cite{winter_operational_2016}, coherence freezing~\cite{bromley_frozen_2015,yu_measure-independent_2016}, and the secrete key rate in quantum key distribution~\cite{ma_operational_2019}. The $l_1$-norm of coherence is closely related to quantum multi-slit interference experiments~\cite{bera_duality_2015} and is used to explore the superiority of quantum algorithms~\cite{hillery_coherence_2016,shi_coherence_2017,liu_coherence_2019}. The robustness of coherence has a direct connection with the success probability in quantum discrimination tasks~\cite{napoli_robustness_2016,piani_robustness_2016,takagi_operational_2019}. The coherence of formation represents the coherence cost, i.e., the minimum rate of a maximally coherent pure state consumed to prepare the given state under incoherent and strictly incoherent operations~\cite{winter_operational_2016}. The coherence concurrence~\cite{qi_measuring_2017} and the geometric measure of coherence~\cite{streltsov_measuring_2015} can be used to investigate the relationship between the resource theory of coherence and entanglement.

\subsection{Spectrum-estimation-based method for coherence detection}
We consider the relative entropy of coherence $C_r(\rho)$ and $l_2$ norm of coherence $C_{l_2}(\rho)$ that can be estimated with spectrum-estimation-based algorithm~\cite{yu_detecting_2019}. The former is
\begin{eqnarray}
	C_{r}(\rho) \geq l_{C_r}(\rho)= S_{\text{VN}}{(\boldsymbol{d})}-S_{\text{VN}}\left(\boldsymbol{d} \vee\left(\wedge_{\boldsymbol{p} \in X} \boldsymbol{p}\right)\right),
	\label{eq:MultpartiteEstimationCr}
\end{eqnarray}
where $S_{\text{VN}}=-\tr[\rho \log \rho]$ being the von Neumann entropy. The latter is determined by
\begin{equation}\label{eq:MultpartiteEstimationCl2}
	C_{l_{2}}(\rho)\geq l_{C_{l_2}}(\rho)= S_{L}(\boldsymbol{d})-S_{L}\left(\boldsymbol{d} \vee \left(\wedge_{\boldsymbol{p} \in X} \boldsymbol{p}\right)\right).
\end{equation}
$\boldsymbol{d}=\left(d_{1}, \ldots, d_{2^{n}}\right)$ are the diagonal elements of $\rho$, $\boldsymbol{p}=\left(p_{1}, \ldots, p_{2^{n}}\right)$ is the estimated probability distribution of the measurement on a certain entangled basis $\left\{\left|\psi_{k}\right\rangle\right\}_{k=1}^{2^{n}}, \vee$ is majorization joint, and $\wedge_{\boldsymbol{p} \in X} \boldsymbol{p}$ is the majorization meet of all probability distributions in $X$~\cite{yu_detecting_2019}. Here the majorization join and meet are defined based on majorization. Without loss of generality, the probability distribution $\boldsymbol{p}$ in $X$ set can be restricted by some equality constraints and inequality constraints, i.e., $X=\{\bm{p}| A\bm{p}\geq\bm{\alpha}, B\bm{p}=\bm{\beta}\}$.

$C_{l_{1}}(\rho)$ and $C_{R}(\rho)$ have relations to $l_{C_{l_2}}(\rho)$ as 
\begin{eqnarray}
	\begin{aligned}
		C_{l_{1}}(\rho) \geq l_{C_{l_1}}(\rho) &=\sqrt{2 l_{C_{l_2}}(\rho)} \sum_{k=1}^{d(d-1) / 2} \sqrt{\hat{v}_{k}}, \\
		C_{R}(\rho) \geq l_{C_{R}}(\rho) &=\sqrt{2 l_{C_{l_2}}(\rho)} \sum_{k=1}^{d(d-1) / 2} \frac{\hat{v}_{k}}{\sqrt{u_{k}}}.
	\end{aligned}
	\label{eq:cl1cR}
\end{eqnarray}
where $\boldsymbol{u}^{\downarrow}= \left(u_{k}\right)_{k=1}^{d(d-1) / 2} $ is a descending sequence with $u_{k}=(2 d_{i} d_{j} / C_{l_2}(\rho))_{i<j}$, 
\begin{eqnarray}
	\hat{v}_{k}=\begin{cases}
		u_{k} & \text { for }  k \leq M \\
		1-\sum_{l=1}^{M} u_{l} & \text { for }  k=M+1 \\
		0 & \text { for }  k>M+1
	\end{cases},
	\label{eq:vandk}
\end{eqnarray}
and $M$ is the largest integer satisfying $\sum_{l=1}^{M} u_{l} \leq 1$. It is notable to consider the following case: if $u_{1} \geq 1$, then $\hat{v}_{1}=1$ and $\hat{v}_{k}=0$ for all $ k \neq 1 $ {according to Eq.~(\ref{eq:vandk}), which leads $\hat{v}_{k}=\hat{u}_{k}=(1,0,\cdots,0)$}.

The convex roof coherence measures $C_{f}(\rho)$ and $\tilde{C_{l_1}}(\rho)$ have relations to $l_{C_r}(\rho)$ and $l_{C_{l_1}}(\rho)$, respectively. It is well known that the value of convex roof coherence measure is greater than that of distance-based coherence measure, so that it is natural to obtain
\begin{equation}
	\begin{aligned}
		C_{f}(\rho) &\geq C_{r}(\rho) \geq l_{C_r}(\rho),\\
		\tilde{C_{l_1}}(\rho) &\geq C_{l_{1}}(\rho) \geq l_{C_{l_1}}(\rho).
	\end{aligned}
	\label{eq:convexr}
\end{equation}
{Henceforth, we denote $l_{C}(\cdot)$ as results from estimations, while $C(\cdot)$ as the results calculated with density matrix (theory) or reconstructed $\rho_{\text{expt}}^{\psi}$ (experiment).}

\subsection{Constructing constraint with stabilizer theory}
{For a $n$-qubit stabilizer state $\ket{\psi_k}$, the constraint $X=\{\bm{p}| A\bm{p}\geq\bm{\alpha}, B\bm{p}=\bm{\beta}\}$ can be constructed by the stabilizer $S_i$ of $\ket{\psi_k}$. However, considering the experimental imperfections, the constraint can be relaxed as~\cite{ding_efficient_2021}  
	\begin{equation}
		A=\mathbb{I}_{d}, \alpha=0,
	\end{equation}
	and 
	\begin{equation}
		\begin{bmatrix}
			\braket{S_1}-w\sigma_1 \\ \vdots \\ \braket{S_{d}}-w\sigma_{d}
		\end{bmatrix}\leq
		B\cdot\bm{p}
		\leq\begin{bmatrix}
			\braket{S_1}+w\sigma_1 \\ \vdots \\ \braket{S_{d}}+w\sigma_{d}
		\end{bmatrix},
	\end{equation}
	where $\sigma_i$ is the statistical error associated with experimentally obtained $\{\braket{S_i}\}$, and $w\sigma_i$ with $w\geq0$ is the deviation to the mean value $\braket{S_i}$ represented in $\sigma_i$. Note that $\braket{\mathbb{I}^{\otimes n}}=1$ must be set in the constraint.}

\section{Detecting the geometric measure of coherence with spectrum-estimation-based method}
\label{sec:sectionIII}
We present that the geometric measure of coherence $C_{g}(\rho)$ is related to $l_{C_{l_2}}(\rho)$.
\begin{theorem}
	The lower bound of the geometric measure of coherence $l_{C_{g}}(\rho)$ of a $n$-qubit quantum state is related to $l_{C_{l_2}}(\rho)$ by 
	\begin{equation}\label{eq:MultpartiteEstimationCg}
		C_{g}(\rho) \geq l_{C_{g}}(\rho) = \frac{d-1}{d}\left(1-\sqrt{1-\frac{d}{d-1} l_{C_{l_2}}(\rho)} \right).
	\end{equation}
\end{theorem}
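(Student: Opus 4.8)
The plan is to transfer the statement into a relation between the two \emph{exact} measures $C_g(\rho)$ and $C_{l_2}(\rho)$, and only insert the spectrum-estimation bound at the very end. Writing $g(t)=2t-\frac{d}{d-1}t^2$, the whole content reduces to the single inequality $C_{l_2}(\rho)\le g\!\left(C_g(\rho)\right)$. Indeed, $g$ is strictly increasing on $[0,\frac{d-1}{d}]$ (the common range of both measures on $n$ qubits) with inverse $f(x)=\frac{d-1}{d}\left(1-\sqrt{1-\frac{d}{d-1}x}\right)$, so applying $f$ gives $C_g(\rho)\ge f\!\left(C_{l_2}(\rho)\right)$. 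Then monotonicity of $f$ together with $C_{l_2}(\rho)\ge l_{C_{l_2}}(\rho)$ from \cref{eq:MultpartiteEstimationCl2} yields $C_g(\rho)\ge f\!\left(l_{C_{l_2}}(\rho)\right)=l_{C_g}(\rho)$, which is exactly \cref{eq:MultpartiteEstimationCg}.

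First I would settle the pure-state case. For $\ket{\psi}=\sum_i\psi_i\ket{i}$ and incoherent $\sigma$ one has $F(\psi,\sigma)=\bra{\psi}\sigma\ket{\psi}$, so maximizing over the diagonal $\sigma$ gives $C_g(\psi)=1-\max_i\abs{\psi_i}^2$, while from \cref{eq:Cgg} and the definition of $C_{l_2}$ one gets $C_{l_2}(\psi)=1-\sum_i\abs{\psi_i}^4$. Setting $p_i=\abs{\psi_i}^2$, $m=\max_i p_i$ and $t=1-m=C_g(\psi)$, the $d-1$ non-maximal weights sum to $t$, so Cauchy--Schwarz gives $\sum_i p_i^2\ge m^2+\frac{t^2}{d-1}=1-2t+\frac{d}{d-1}t^2$. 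This is precisely $C_{l_2}(\psi)\le g\!\left(C_g(\psi)\right)$, with equality when the non-maximal weights are equal, in particular for the maximally coherent state where the bound is tight.

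To lift this to mixed $\rho$ I would use three facts: $C_{l_2}$ is convex, being $\norm{\rho-\rho_{d}}_{l_2}^2$, a convex function composed with the linear map $\rho\mapsto\rho-\rho_{d}$; $g$ is concave ($g''<0$) and increasing on the admissible range; and the fidelity-based $C_g$ of \cref{eq:Cgg} coincides with the convex roof of the pure-state geometric measure~\cite{streltsov_measuring_2015}. Taking an optimal decomposition $\rho=\sum_i p_i\proj{\psi_i}$ with $\sum_i p_i C_g(\psi_i)=C_g(\rho)$, I would chain convexity of $C_{l_2}$, the pure-state bound, and Jensen's inequality,
\begin{equation*}
C_{l_2}(\rho)\le\sum_i p_i C_{l_2}(\psi_i)\le\sum_i p_i\,g\!\left(C_g(\psi_i)\right)\le g\!\Big(\sum_i p_i C_g(\psi_i)\Big)=g\!\left(C_g(\rho)\right),
\end{equation*}
which is the desired inequality.

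The routine parts are the pure-state Cauchy--Schwarz estimate and the algebra of inverting $g$. The step needing the most care is the mixed-state extension: it relies on the convex-roof identity for the fidelity-based geometric measure, and on checking that every argument stays in $[0,\frac{d-1}{d}]$ so that the monotonicity of $f$, the concavity/Jensen step for $g$, and the reality of the square root in $l_{C_g}$ are simultaneously legitimate (this holds because $C_g(\psi_i)\le\frac{d-1}{d}$ and $l_{C_{l_2}}(\rho)\le C_{l_2}(\rho)\le\frac{d-1}{d}$). I expect this domain bookkeeping, rather than any single inequality, to be the main obstacle.
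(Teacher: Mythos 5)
Your proposal is correct, and its outer skeleton matches the paper's: both arguments reduce the theorem to the single inequality $C_g(\rho)\ge f\!\left(C_{l_2}(\rho)\right)$ with $f(x)=\frac{d-1}{d}\bigl(1-\sqrt{1-\frac{d}{d-1}x}\bigr)$, and then conclude by monotonicity of $f$ together with $C_{l_2}(\rho)\ge l_{C_{l_2}}(\rho)$. The difference is in how that key inequality is obtained. The paper simply imports it from Ref.~\cite{zhang_estimation_2017} (stated there in the equivalent form with $\operatorname{Tr}(\rho^2)-\sum_i\rho_{ii}^2$ in place of $C_{l_2}(\rho)$) and only verifies the algebraic rewriting and the monotonicity of $G(d,\cdot)$. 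You instead give a self-contained derivation: the pure-state identities $C_g(\psi)=1-\max_i\abs{\psi_i}^2$ and $C_{l_2}(\psi)=1-\sum_i\abs{\psi_i}^4$ plus Cauchy--Schwarz yield $C_{l_2}(\psi)\le g(C_g(\psi))$, and the lift to mixed states chains convexity of $C_{l_2}$, Jensen's inequality for the concave $g$, and the convex-roof identity for the fidelity-based $C_g$ from Ref.~\cite{streltsov_measuring_2015}. Each step checks out, including the inversion of $g$ on $[0,\tfrac{d-1}{d}]$ and the domain bookkeeping that keeps the square root real. What your route buys is independence from the external citation and an explicit identification of the equality case (uniform non-maximal weights, i.e.\ the maximally coherent states), which the paper only asserts later in \cref{sec:sectionVI}; what it costs is reliance on the convex-roof characterization of $C_g$, a nontrivial imported fact in its own right, so the argument is longer but not strictly more elementary.
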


\begin{proof}
	It has been proved that~\cite{zhang_estimation_2017}
	\begin{equation}
	\label{eq:Cg}
		C_{g}(\rho) \geq 1-\frac{1}{d}-\frac{d-1}{d} \sqrt{1-\frac{d}{d-1}\left(\operatorname{Tr}\left(\rho^{2}\right)-\sum_{i=1}^{d} \rho_{i i}^{2}\right)}.
	\end{equation}
	We rewrite the right side of equation (\ref{eq:Cg}) and denote the function of $C_{l_{2}}$ as $G(C_{l_{2}})$ by
	\begin{equation}
		\begin{split}
			&1-\frac{1}{d}-\frac{d-1}{d} \sqrt{1-\frac{d}{d-1}\left(\operatorname{Tr}\left(\rho^{2}\right)-\sum_{i=1}^{d} \rho_{i i}^{2}\right)}\\
			&= \frac{d-1}{d}\left(1-\sqrt{1-\frac{d}{d-1}{\left(S_{L}(\boldsymbol{d})-S_{L}(\boldsymbol{\lambda})\right)} }\right)\\
			&= \frac{d-1}{d}\left(1-\sqrt{1-\frac{d}{d-1}{\left(C_{l_{2}}(\rho)\right)} }\right)\\
			&= G\left(d,C_{l_{2}}(\rho)\right).
		\end{split}
	\end{equation}
	It is easy to check that $G\left(d,C_{l_{2}}(\rho)\right)$ is an increasing function of $C_{l_{2}} (\rho)$ when $d>1$, which implies $C_{g}(\rho) \geq G\left(d,C_{l_{2}}(\rho)\right) \geq G\left(d,l_{C_{l_{2}}}(\rho)\right)= l_{C_{g}}(\rho)$, i.e.,	
	\begin{equation}
		C_{g}(\rho) \geq l_{C_{g}}(\rho) = \frac{d-1}{d}\left(1-\sqrt{1-\frac{d}{d-1} l_{C_{l_2}}(\rho)} \right).
			\label{eq:CGGG}
	\end{equation}	  
\end{proof}

\section{Tightness of estimated lower bounds}\label{sec:sectionVI}
{The lower bounds $l_{C_{r}}$, $l_{C_{l_2}}$, $l_{C_{l_1}}$ and $l_{C_{R}}$ are tight for pure states~\cite{yu_detecting_2019}. $l_{\tilde{C_{l_1}}}$ and $l_{C_{f}}$ related to $l_{C_{l_1}}$ and $l_{C_{r}}$ as shown in~\cref{eq:convexr} are tight for pure states as well. However, the tightness of $l_{C_{g}}$ is quite different. As shown in~\cref{eq:MultpartiteEstimationCg}, $l_{C_{g}}$ is related to the dimension of quantum system $d$ as well as $l_{C_{l_{2}}}$. Although $l_{C_{l_{2}}}$ is tight for stabilizer states, $l_{C_{g}}$ is generally not due to the fact of $\frac{d-1}{d}$. The equality in~\cref{eq:MultpartiteEstimationCg} holds for a special family of states, i.e., the maximal coherent states $\ket{\Psi_{d}}=\frac{1}{\sqrt{d}} \sum_{\alpha=0}^{d-1} e^{i\theta_{\alpha}}|\alpha\rangle$~\cite{zhang_estimation_2017}.

To investigate the tightness of the estimated bounds $l_C$ on multipartite states, we consider the graph states
\begin{equation}
\ket{G}=\prod_{(i,j)\in E}CZ_{(i,j)}\ket{+}^{\otimes n}.\label{G}
\end{equation}
For a target graph $G$ with $n$ qubits (vertices), the initial states are the tensor product of $\ket{+}=(\ket{0}+\ket{1})/\sqrt{2}$. An edge $(i,j)\in E$ corresponds to a two-qubit controlled Z gate $CZ_{(i,j)}$ acting on two qubits $i$ and $j$. Particularly, we investigate two types of graphs. The first one is star graph, and the corresponding state is $n$-qubit GHZ states $\ket{\text{GHZ}_n}=\frac{1}{\sqrt{2}}(\ket{0}^{\otimes n}+\ket{1}^{\otimes n})$ with local unitary transformations acting on one or more of the qubits. The second one is linear graph, and the corresponding state is $n$-qubit linear cluster state $\ket{\text{C}_n}$~\cite{Briegel2001}, which is the ground state of the Hamiltonian 
\begin{equation}
    \mathcal H(n)=\sum_{i=2}^{n-1}Z^{(i-1)}X^{(i)}Z^{(i-1)}-X^{(1)}Z^{(2)}-Z^{(n-1)}X^{(n)},
\end{equation}
where $X^{(i)}$, $Y^{(i)}$ and $Z^{(i)}$ denote the Pauli matrices acting on qubit $i$.
\begin{figure*}[ht!bp]%4.25[width=0.95\columnwidth]
	\includegraphics[scale=0.8]{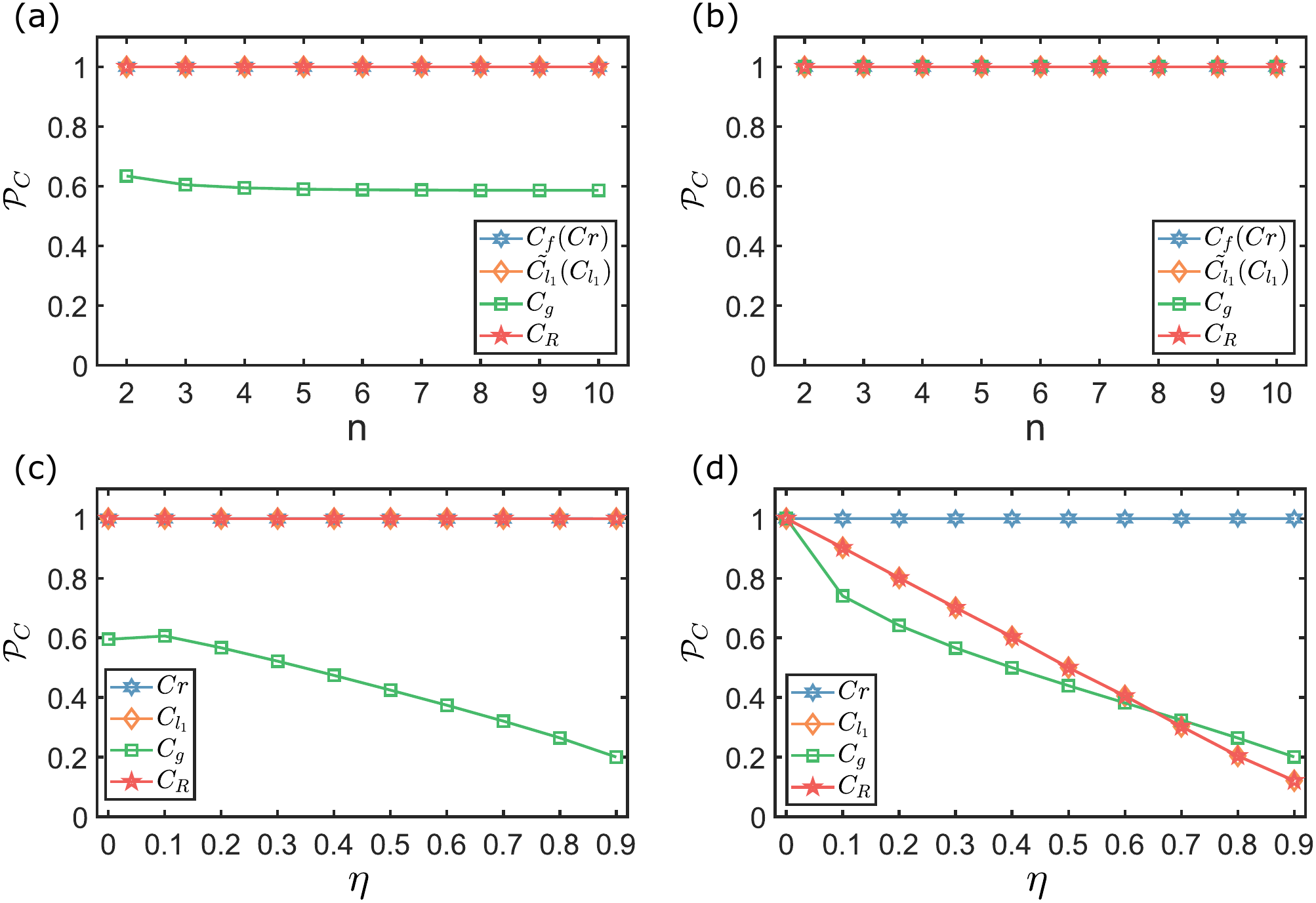} 
	\caption{Theoretical results $\mathcal P_C$ of coherence measures of $ C_f(C_r), \tilde{C_{l_1}}(C_{l_1}), C_g$ and $C_R$ on state (a), $\ket{\text{GHZ}_n}$, (b), $\ket{\text{C}_n}$, (c), $\rho_{\text{Noisy}}^{\text{GHZ}_4}$, (d), $\rho_{\text{Noisy}}^{\text{C}_4}$.}
	\label{Fig:GHZ}
\end{figure*}

For $\ket{\text{GHZ}_n}$ and $\ket{\text{C}_n}$ with $n$ up to 10, we calculate $\mathcal P_C=l_C/C$ to indicate the tightness (accuracy) of estimations, and the results are shown in~\cref{Fig:GHZ}(a) and~\cref{Fig:GHZ}(b), respectively. For $\ket{\text{GHZ}_n}$, we observe that $\mathcal P_C$ is 1 in the estimation of $C_f(C_r),\tilde{C_{l_1}}(C_{l_1})$ and $C_R$, which indicates the corresponding bounds are tight as the target states are pure state. The reason is that $l_{C_{g}}$ is determined by $d$ and $l_{C_{l_{2}}}$ as shown in~\cref{eq:MultpartiteEstimationCg}. For $\ket{\text{GHZ}_n}$, $l_{C_{l_{2}}}=1/2$ regardless of $n$. Then, we take the partial derivative of $l_{C_{g}}$ in~\cref{eq:MultpartiteEstimationCg} with respect to $d$, and obtain
\begin{eqnarray}
	\begin{aligned}
			\frac{\partial l_{C_{g}}}{\partial d} &= \frac{-\left( 2\left(d-1\right)[\sqrt{1-\frac{d}{d-1}l_{C_{l_{2}}}}-1]+dl_{C_{l_{2}}}\right)}{\sqrt{1-\frac{d}{d-1}l_{C_{l_{2}}}}\left(2d^2-2d^3\right)}\\
			&\leq -\frac{3(d-1)-2(d-1)\sqrt{1-\frac{d}{d-1}l_{C_{l_{2}}}}}{\sqrt{1-\frac{d}{d-1}l_{C_{l_{2}}}}\left(2d^3-2d^2\right)}\\
			& \leq 0.
	\end{aligned}
\end{eqnarray}
It is clear that $l_{C_{g}}$ is monotonically decreasing with respect to $d$. Note that $l_{C_{g}}\to (\sqrt{2}-1)/\sqrt{2}\approx 0.2929$ and $\mathcal P_C\to 2(\sqrt{2}-1)/\sqrt{2}\approx 0.5858 $ when $d\to\infty$. 

The results of $\ket{\text{C}_n}$ is quite different as shown in~\cref{Fig:GHZ}(b). $\mathcal P_C$ is 1 in the estimation of $C_g$ as $\ket{\text{C}_n}$ is in the form of maximally coherent state $\ket{\Psi_{d}}=\frac{1}{\sqrt{d}} \sum_{\alpha=0}^{d-1} e^{i\theta_{\alpha}}\ket{\alpha}$. For example, $\ket{\text{C}_3}=(\ket{+0+}+\ket{-1-})/\sqrt{2}$ and we rewrite it in the computational basis
\begin{equation}
\begin{split}
     \ket{\text{C}_3}=\frac{1}{\sqrt{2^3}}(&\ket{000}+\ket{001}+\ket{010}-\ket{011}\\
     &+\ket{100}+\ket{101}-\ket{110}+\ket{111}).    
\end{split}
\end{equation}
By re-encoding $\ket{\alpha_1\alpha_2\alpha_3}$ to $\ket{\alpha}$ by $\alpha=\alpha_12^2+\alpha_22^1+\alpha_32^0$, $\ket{\text{C}_3}$ can be represented in the form of maximally coherent state $\frac{1}{\sqrt{d}} \sum_{\alpha=0}^{d-1} e^{i\theta_{\alpha}}\ket{\alpha}$ with $\bm{\theta_\alpha}=(0,0,0,\pi,0,0,\pi,0)$.

Furthermore, we investigate the robustness of $\mathcal P$ of GHZ states and linear cluster states in a noisy environment. We consider the following imperfect GHZ state and linear cluster state
\begin{equation}
\rho_{\text{Noisy}}^{\psi}= (1-\eta) \ket{\psi}\bra{\psi} + \frac{\eta}{d}\mathbb{I}_{d},
\label{eq:noisystate}
\end{equation}
with $\ket{\psi}$ being either $\ket{\text{GHZ}_n}$ or $\ket{\text{C}_n}$ and $0\leq\eta\leq1$. Note that $\rho_{\text{Noisy}}^{\psi}$ can be written in the form of graph-diagonal state , i.e., $\rho_{\text{Noisy}}^{\psi}=\sum\lambda_k\ket{\psi_k}\bra{\psi_k}$ so that $l_{C_r}$ and $l_{C_{l_2}}$ are tight for $\rho_{\text{Noisy}}^{\psi}$~\cite{ding_efficient_2021}. The estimation of $l_{C_{l_1}}$ is equivalent to the optimization of 
\begin{equation}
	\begin{split}
		\underset{v_k}{\text{minimize}} & \sqrt{2l_{C_{l_2}}} \sum_{k=1}^{d(d-1)/2} \sqrt{v_k} \\
		\text {subject to} & \sum_{k=1}^{d(d-1) / 2} v_{k}=1, \\
		& 0 \leqslant v_{k} \leqslant u_{k}.
	\end{split}
	\label{eq:CR_check_1}
\end{equation}

For $\rho_{\text{Noisy}}^{\text{GHZ}_n}$ in form of
\begin{equation}
    \rho_{\text{Noisy}}^{\text{GHZ}_n}=  \begin{pmatrix}
     \frac{1}{2}(1-\eta)+\frac{1}{2^n}\eta & 0 & \dots & \frac{1}{2}(1-\eta) \\
    0 & \frac{1}{2^n}\eta & \dots & 0 \\
    \vdots & \vdots & \ddots & \vdots \\
    0 &  \dots &\frac{1}{2^n}\eta& 0 \\
    \frac{1}{2}(1-\eta) & 0 & \dots & \frac{1}{2}(1-\eta)+\frac{1}{2^n}\eta
  \end{pmatrix},
\end{equation}
it is easy to calculate that $u_{1} \geq 1$ and $\hat{v}_{k}=\hat{u}_{k}=(1,0,\cdots,0)$. As $l_{C_{l_2}}$ is tight for $\rho_{\text{Noisy}}^{\text{GHZ}_n}$ so that we can obtain
\begin{equation}
\begin{split}
    l_{C_{l_1}}&=\sqrt{2C_{l_2}}=\sqrt{2\sum_{i\neq j}\abs{\rho_{\text{Noisy}}^{\text{GHZ}_n}}_{ij}^2}\\
    &=1-\eta=\sum_{i\neq j}\abs{\rho_{\text{Noisy}}^{\text{GHZ}_n}}_{ij}=C_{l_1},
\end{split}
\end{equation}
which indicates $l_{C_{l_1}}$ is tight for $\rho_{\text{Noisy}}^{\text{GHZ}_n}$. Following the same way, We can also obtain $l_{C_R}=C_R$. 

For $\rho_{\text{Noisy}}^{\text{C}_n}$, as its matrix elemetns satisfy 
\begin{eqnarray}
	\abs{\rho_{\text{Noisy}}^{\text{C}_n}}_{ij}=\begin{cases}
		\frac{1}{d} & \text { for }  i=j \\
		\frac{1-\eta}{d}  & \text { for }  i\neq j
	\end{cases},
	\label{eq:C4elements}
\end{eqnarray}
so that we can calculate the $C_{l_{2}}=l_{C_{l_{2}}}= \frac{d-1}{d}(1-\eta)^2$, $C_R=C_{l_1}=(d-1)(1-\eta)$ and 
\begin{equation}
\begin{aligned}
   \hat{v}_{k}= &\hat{u}_{k}= \{\underbrace{\frac{1}{\frac{d(d-1)}{2}(1-\eta)^2},...,\frac{1}{\frac{d(d-1)}{2}(1-\eta)^2}}_{M}\\
   &,1-\frac{M}{\frac{d(d-1)}{2}(1-\eta)^2},\underbrace{0,0,0,...,0,0}_{\frac{d(d-1)}{2}-(M+1)}\}.
\end{aligned}
\end{equation}
$M=\lfloor \frac{d(d-1)}{2}(1-\eta)^2 \rfloor$ is the largest integer satisfying $\sum_{l=1}^M u_{l}  \leq 1$. With $l_{C_{l_{2}}}$ and $M$, we can calculate $l_{C_{l_1}}$ and $l_{C_R}$
\begin{equation}
	\begin{split}
			l_{C_{l_1}} = l_{C_R} &= \sqrt{\frac{2(d-1)}{d}(1-\eta)^2} \left({M \sqrt{\frac{1}{\frac{d(d-1)}{2}(1-\eta)^2}}} \right. \\ &+ \left. { \sqrt{1-\frac{M}{\frac{d(d-1)}{2}(1-\eta)^2}}} \right).
	\end{split}
	\label{eq:noisyC4}
\end{equation}
As $M\approx \frac{d(d-1)}{2}(1-\eta)^2$ so we have $l_{C_{l_1}} = l_{C_R}\approx (d-1)(1-\eta)^2$. Therefore, $\mathcal P_C$ of $l_{C_{l_1}}$ and $l_{C_R}$ for noisy cluster state is $l_{C_{l_1}}/C_{l_1}=l_{C_R}/C_R\approx1-\eta$. 

To give an intuitive illustration of our conclusion about tightness of $l_C$ on noisy states, we calculate $\mathcal P_C$ on 4-qubit noisy GHZ state and linear cluster state, i.e., $\ket{\text{GHZ}_4}=(\ket{0000}+\ket{1111})/\sqrt{2}$ and $\ket{\text{C}_4}=(\ket{+0+0}+\ket{+0-1}+\ket{-1-0}+\ket{-1+1})/
2$. The results are shown in~\cref{Fig:GHZ}(c) and~\cref{Fig:GHZ}(d), respectively. In~\cref{Fig:GHZ}(c), $\mathcal P_C$ of $l_{C_r}$, $l_{C_{l_1}}$ and $l_{C_R}$ are still tight. In~\cref{Fig:GHZ}(d), $\mathcal P_C$ of $l_{C_r}$ is tight while $\mathcal P_C$ of $l_{C_{l_1}}$ and $l_{C_R}$ linearly decrease with $\eta$. $l_{C_g}$ also exhibits linear decrease with $\eta$ in~\cref{Fig:GHZ}(c) and~\cref{Fig:GHZ}(d) because $l_{C_{l_2}}\sim(1-\eta)^2$.}

\section{Comparison with other coherence estimation methods}
\label{sec:sectionV}

	\begin{table*}[ht!bp]%The best place to locate the table environment is directly after its first reference in text
	\caption{\label{tab:varimethods}%
		Comparison of the spectrum-estimation-based~\cite{yu_detecting_2019}, fidelity-based~\cite{dai_experimentally_2020} and witness-based coherence estimation methods~\cite{ma_detecting_2021} on $\rho_{\text{expt}}^{\text{GHZ}_3}$ and $\rho_{\text{expt}}^{\text{GHZ}_4}$. The cases of $W_1$ and $W_3$ are discussed in Ref.~\cite{ma_detecting_2021}.
	}
	\begin{ruledtabular}
		\begin{tabular}{cccccc}
			Coherence Measure 	&    Method   & \multicolumn{2}{c}{$\rho_{\text{expt}}^{\text{GHZ}_3}$} & \multicolumn{2}{c}{$\rho_{\text{expt}}^{\text{GHZ}_4}$} \\
			&       &  $l_{C}^{\text{max}}$ & $\mathcal{P}_{C}$  & $l_{C}^{\text{max}}$ & $\mathcal{P}_{C}$ \\ 	\colrule
			\multirow{3}[0]{*}{$Cr/C_f $} &  Tomography & 0.8755(19) &  & 0.9059(29) &  \\
			& Spectrum Est. & 0.8099& 92.51(22)\% & 0.8680 & 95.81(32)\% \\
			& Fid.-Based Est. & 0.2216(2) & 25.31(31)\% & 0.2163(3) & 34.91(46)\% \\ 	
			\multirow{3}[0]{*}{$C_{l_1}/\tilde{C_{l_1}} $} & Tomography &  1.2810(47) &  &  1.4248(46) & \\
			& Spectrum Est. & 0.9393 & 73.09(37)\% & 0.9420 & 66.11(32)\% \\
			& Fid.-Based Est. & 0.9287(6) & 72.50(43)\% & 0.9139(8) & 64.14(41)\% \\ 	
			\multirow{3}[0]{*}{$C_g $} &  Tomography & { 0.3571(11)} &  & { 0.3728(17)} &  \\
			& Spectrum Est. & 0.2789 & 78.10(31)\% & 0.2710 & 72.69(46)\% \\
			& Fid.-Based Est. & 0.0229(0) & 6.41(31)\% & 0.0222(0) & 5.95(46)\% \\ 	
			\multirow{3}[1]{*}{$C_R $} & { Tomography} & {1.2680(50)} &  & { 1.3942(48)} &    \\
			& Spectrum Est. & 0.9393 & 73.84(39)\% & 0.9420 & 67.56(34)\% \\
			& $-\Tr(W_3\rho)$\footnote{$W_3=\frac{1}{2} \mathbb{I} - \ket{\text{GHZ}_n}\bra{\text{GHZ}_n}$ } & 0.4644(3) & 36.62(46)\% & 0.4659(4) & 33.42(43)\% \\
			& $-\Tr(W_1\rho)$\footnote{$W_1=\Delta(\ket{\text{GHZ}_n}\bra{\text{GHZ}_n})-\ket{\text{GHZ}_n}\bra{\text{GHZ}_n}$, where $\Delta(\rho)=\sum_{i=1}^{d}\ket{i}\bra{i}\rho\ket{i}\bra{i}$}  & 0.4714(3) & 37.17(46)\% & 0.4684(4) & 33.60(43)\% \\
		\end{tabular}
	\end{ruledtabular}
\end{table*}
{Besides the spectrum-estimation-based method, another two efficient coherence estimation methods for multipartite states have been proposed recently, namely the fidelity-based estimation method~\cite{dai_experimentally_2020} and the witness-based estimation method~\cite{ma_detecting_2021}, respectively. Specifically, $C_{f}$,$C_{g}$,$\tilde{C_{l_{1}}}$ can be estimated via the fidelity-based estimation method and $C_{R}$ can be estimated via the witness-based estimation method. In this section, we compare the accuracy of $l_C$ with difference estimation method with experimental data of  $\rho_{\text{expt}}^{\text{GHZ}_3}$ and $\rho_{\text{expt}}^{\text{GHZ}_4}$ from Ref.~\cite{ding_efficient_2021}.

To this end, we first estimate $l_C$ of the coherence measures $C$ introduced in~\cref{sec:sectionII} on states $\rho_{\text{expt}}^{\text{GHZ}_3}$ and $\rho_{\text{expt}}^{\text{GHZ}_4}$ via spectrum-estimation-based method. We employ the experimentally obtained expected values of the stabilizing operators $\mathcal{S}^{\text{GHZ}_n}$ and the corresponding statistical errors $\sigma$ to construct constraints in $X$. We denote the lower bound of estimated multipartite coherence as $l_{C,m}^{w}$, where $C$ is the coherence measure $\in\{ C_f(C_r), \tilde{C_{l_1}}(C_{l_1}), C_g,C_R\}$ and $m\leq2^{n}-1$ is the stabilizing operators we selected for construction of constraints in $X$. In our estimations, all results are obtained by setting $w=3$. Here, we only consider the case  of maximal $l_{C,m}$. In the ideal case, the maximal $l_{C,m}$ is obtained by setting all $m$ stabilizing operators in the constraint. However, a larger $m$ might lead to the case of no feasible solution due to the imperfections in experiments. In practice, the maximal estimated coherence is often obtained with $m\leq 2^{n}-1$ stabilizing operators. Let $l_{C}^{\text{max}}$ be the maximal estimated coherence over all subsets $\{S_i\}$, where the number of subset is $\sum_{m=1}^{2^n-1} \binom{2^n - 1}{m}= 2^{2^{n}-1}-1$. The results of $l_{C}^{\text{max}}$ are shown in~\cref{tab:varimethods}. 

The accuracy estimated bounds is indicated by $\mathcal P_C=l_{C}^{\text{max}}/C$. Note that $C_{r}$ and $C_{l_{1}}$ of $\rho^{\psi}_{\text{expt}}$ can be calculated directly according to the definition in~\cref{eq:Cr} and~\cref{eq:l1}, while the calculations of $C_{g}(\cdot)$ and $C_{R}(\cdot)$ require converting them to the convex optimization problem \cite{napoli_robustness_2016,piani_robustness_2016,zhang_numerical_2020} and the corresponding solution~\cite{boyd_convex_2004,cvx,gb08}. The calculation of ${C_{f}},\tilde{C}_{l_{1}}$ requires optimizing all pure state decomposition, and there is no general method for analytical and numerical solutions except a few special cases. Therefore, we replace ${C_{f}},\tilde{C}_{l_{1}}$ of these tomographic states by their $C_{r},C_{l_{1}}$ when calculating the estimated accuracy, respectively. The replacement increases $\mathcal{P}_{C}$ when we compare the two estimation methods of spectrum-estimation-based and fidelity-based so that it does not affect our conclusion about the comparison. 

We also perform the fidelity-based estimation method and witness-based estimation method on the same experimental data to obtain $l_C$ and $\mathcal{P}_C$. The results of $l_C$ and $\mathcal{P}_C$ with these three estimation methods are shown in~\cref{tab:varimethods}. We find that the spectrum-estimation-based and fidelity-based coherence estimation methods have similar performance on the estimation of $\tilde{C_{l_{1}}}$, in which the accuracy is beyond 0.7 for $\rho^{\text{GHZ}_3}_{\text{expt}}$ and 0.6 for $\rho^{\text{GHZ}_4}_{\text{expt}}$. Importantly, the spectrum-estimation-based method shows a significant enhancement in the estimation of $C_{f}$ and $C_{g}$ compared with the fidelity-based method, as well as in the estimation of $C_{R}$ compared with the witness-based method. }

\section{Conclusions}
\label{sec:sectionIV}

In this work, we first develop the approach to estimating the lower bound of coherence for the geometric measure of coherence via the spectrum-estimation-based method, {i.e., we present the relation between the geometric measure of coherence and $l_2$ norm of coherence. Then, we investigate the tightness of estimations of various coherence measures on GHZ states and linear cluster states, including the geometric measure of coherence, the relative entropy of coherence, the $l_1$-norm of coherence, the robustness of coherence, and some convex roof quantifiers of coherence. Finally, we compare the accuracy of the estimated lower bound with the spectrum-estimation-based method, fidelity-based estimation method, and the witness-based estimation method on the same experimental data. 

We conclude that the spectrum-estimation-based method is an efficient toolbox to indicate various multipartite coherence measures. For $n$-qubit stabilizer states, it only requires at most $n$ measurements instead of $3^n$ measurements required in quantum state tomography. Second, the tightness of the lower bound is not only determined by whether the target state is pure or mixed but also by the coherence measures. We give examples that the lower bound of the geometric measure of coherence is tight for $n$-qubit linear cluster states but is not tight for noisy $n$-qubit GHZ states, and the lower bounds of the robustness of coherence and $l_1$-norm of coherence are tight for noisy $n$-qubit GHZ states but is not tight for noisy $n$-qubit linear cluster states. Third, we find that the spectrum-estimation-based method has a significant improvement in coherence estimation compared to fidelity-based method and the witness-based method.}

\begin{acknowledgments}
We are grateful to anonymous referees for providing very useful comments on earlier versions of this manuscript. This work is supported by the National Natural Science Foundation of China (Grant No. 11974213 and No. 92065112), National Key R\&D Program of China (Grant No. 2019YFA0308200), and Shandong Provincial Natural Science Foundation (Grant No. ZR2019MA001 and No. ZR2020JQ05), Taishan Scholar of Shandong Province (Grant No. tsqn202103013) and Shandong University Multidisciplinary Research and Innovation Team of Young Scholars (Grant No. 2020QNQT).
\end{acknowledgments}

\bibliography{COC.bib}

\end{document}